\let\mathcal\mathscr
\let\phi=\varphi
\let\kappa=\varkappa
\DeclareMathOperator{\sym}{sym}
\newcommand*{\sd}[2]{\{\,#1\mid#2\,\}}
\newcommand*{\abs}[1]{\left|#1\right|}
\newcommand*{\Ev}{\mathbf{E}}
\newcommand*{\Sv}{\mathbf{S}}
\newcommand*{\Int}{\mathbf{I}}
\theoremstyle{theorem}
\newtheorem{proposition}{Proposition}
\newtheorem{theorem}{Theorem}
\newtheorem{lemma}{Lemma}
\theoremstyle{definition}
\theoremstyle{remark}
\newtheorem{remark}{Remark}
\let\mathcal\mathscr
\newcommand{\cprime}{\/{\mathsurround=0pt$'$}}
\begin{document}
\date{\today} \title[2D reductions and their nonlocal symmetries]{2D
  reductions of the equation \large
  $\mathbf{u_{yy} = u_{tx} + u_yu_{xx} - u_xu_{xy}}$\normalsize\\ and their
  nonlocal symmetries}

\author{P.~Holba} \address{Mathematical
  Institute, Silesian University in Opava, Na Rybn\'{\i}\v{c}ku 1, 746 01
  Opava, Czech Republic} \email{M160016@math.slu.cz}
\author{I.S.~Krasil{\cprime}shchik} \address{Independent University of Moscow,
  B. Vlasevsky 11, 119002 Moscow, Russia
  \& Trapeznikov
  Institute of Control Sciences, 65 Profsoyuznaya street, Moscow 117997,
  Russia} \email{josephkra@gmail.com} \author{O.I.~Morozov} \address{Faculty
  of Applied Mathematics, AGH University of Science and Technology,
  Al. Mickiewicza 30, Krak\'ow 30-059, Poland}
\email{morozov{\symbol{64}}agh.edu.pl} \author{P.~Voj{\v{c}}{\'{a}}k}
\address{Mathematical Institute, Silesian University in Opava, Na
  Rybn\'{\i}\v{c}ku 1, 746 01 Opava, Czech Republic}
\email{Petr.Vojcak@math.slu.cz}

\date{\today}

\begin{abstract}
  We consider the 3D equation $u_{yy} = u_{tx} + u_yu_{xx} - u_xu_{xy}$ and
  its 2D reductions: (1) $u_{yy} = (u_y+y)u_{xx}-u_xu_{xy}-2$ (which is
  equivalent to the Gibbons-Tsarev equation) and (2)
  $u_{yy} = (u_y+2x)u_{xx} + (y-u_x)u_{xy} -u_x$. Using reduction of the known
  Lax pair for the 3D equation, we describe nonlocal symmetries of~(1) and~(2)
  and show that the Lie algebras of these symmetries are isomorphic to the
  Witt algebra.
\end{abstract}

\keywords{Partial differential equations, Lax integrable equations, symmetry
  reductions, nonlocal symmetries, Gibbons-Tsarev equation}

\subjclass[2010]{35B06}

\maketitle

\tableofcontents
\newpage

\section*{Introduction}

The equation
\begin{equation}
  \label{eq:1}
  u_{yy} = u_{tx} + u_yu_{xx} - u_xu_{xy}
\end{equation}
belongs to the class of linearly degenerate integrable
equations~\cite{Fer-Moss-2015} and was, as far as we know, introduced first
in~\cite{Mikhalev-1992} and independently
in~\cite{Pavlov-2003,Dunajski-2004}. In~\cite{B-K-M-V-2014}, we described all
two-dimensional symmetry reductions of this equation. All these reductions are
either linearizable or exactly solvable, except for the two ones:
\begin{equation}
  \label{eq:2}
  u_{yy} = (u_y+y)u_{xx}-u_xu_{xy}-2
\end{equation}
and
\begin{equation}
  \label{eq:3}
  u_{yy} = (u_y+2x)u_{xx} + (y-u_x)u_{xy} -u_x.
\end{equation}
Equation~\eqref{eq:2} is reduced to the Gibbons-Tsarev
equation~\cite{Gibbons-Tsarev-1996} by the simple transformation
$u \mapsto u-y^2/2$.

Equation~\eqref{eq:1} admits the Lax pair
\begin{equation}
  \label{eq:4}
  \begin{array}{lcl}
    w_{t}&=&(\lambda^2-\lambda u_x-u_y)w_x,\\
    w_{y}&=&(\lambda-u_x)w_{x}
  \end{array}
\end{equation}
with the non-removable spectral parameter~$\lambda$. In~\cite{B-K-M-V-2015},
we, in particular, studied the behavior of this Lax pair under the symmetry
reduction.

In this paper, we use system~\eqref{eq:4} to describe nonlocal symmetries of
Equations~\eqref{eq:2} and~\eqref{eq:3} and prove that in both cases these
symmetries form the Lie algebra isomorphic to the Witt algebra
\begin{equation*}
  \mathfrak{W} = \sd{e_i=z^{i+1}\textstyle{\frac{\partial}{\partial z}} }
  {i\in \mathbb{Z}}.
\end{equation*}
In Section~\ref{sec:preliminaries}, we recall some necessary results obtained
in the previous research and introduce the notions and constructions needed
for the subsequent exposition. Section~\ref{sec:main-result} is devoted to the
proofs of the basic results. We discuss in detail Equation~\eqref{eq:2} and
briefly repeat the main step for the second equation, because the reasoning is
quite similar in both cases. The obtained results are discussed in
Section~\ref{sec:discussion} together with further perspectives.

\section{Preliminaries}
\label{sec:preliminaries}

We recall here basic facts from the theory of nonlocal symmetries
(see~\cite{Kras-Vin-Trends-1989}) and previous results on
Equation~\eqref{eq:1} and its reductions, see~\cite{B-K-M-V-2014,
  B-K-M-V-2015}.

\subsection{Basics}
\label{sec:basics}

Consider a differential equation\footnote{We do not distinguish between scalar
and multi-component systems.}
\begin{equation}\label{eq:5}
  F^\alpha\left(x,\dots,\frac{\partial^{\abs{\sigma}}u}{\partial
      x^\sigma},\dots\right) = 0,\qquad \alpha=1,\dots,r,
\end{equation}
of order~$k$ in unknowns~$u=(u^1,\dots,u^m)$, where~$u^j=u^j(x)$,
$x=(x^1,\dots,x^n)$. To the infinite prolongation of~\eqref{eq:5} there
corresponds a locus $\mathcal{E}\subset J^\infty(\pi)$ in the space of
infinite jets of the trivial bundle
$\pi\colon \mathbb{R}^m\times\mathbb{R}^n \to \mathbb{R}^n$,
see~\cite{AMS-book}. Consider jet coordinates~$u_\sigma^j$ on~$J^\infty(\pi)$,
$j=1,\dots,m$, where~$\sigma=i_1\dots i_s$ is a symmetric multi-index,
$i_\alpha=1,\dots,n$, and choose a set $\Int=\{u_\sigma^j\}$ of internal
variables on~$\mathcal{E}$. Denote by
\begin{equation*}
  D_{x^i} = \frac{\partial}{\partial x^i} + \sum_{\Int}u_{\sigma
    i}\frac{\partial}{\partial u_\sigma^j},\qquad i=1,\dots,n,
\end{equation*}
the operators of total derivatives on~$\mathcal{E}$. For any function~$f$
on~$J^\infty(\pi)$ we denote by~$\ell_f$ the restriction of its linearization
$\sum_{j,\sigma} \partial f/\partial u_\sigma^j D_\sigma$ to~$\mathcal{E}$,
where $D_\sigma = D_{x^{i_1}}\circ\dots\circ D_{x^{i_s}}$. A (local) symmetry
of~$\mathcal{E}$ is an evolutionary vector field
\begin{equation*}
  \Ev_\phi = \sum_{\Int}D_\sigma(\phi^j)\frac{\partial}{\partial u_\sigma^j},
\end{equation*}
where the generating function $\phi=(\phi^1,\dots,\phi^m)$ satisfies the
equation $\ell_F(\phi)=0$ and $F=(F^1,\dots,F^r)$ determines
Equation~\eqref{eq:5}. We identify symmetries with their generating functions
and denote the Lie algebra of symmetries by~$\sym\mathcal{E}$.

A (differential) covering over~$\mathcal{E}$ is a bundle
$\tau\colon \tilde{\mathcal{E}}= \mathcal{E}\times \mathbb{R}^l\to
\mathcal{E}$, $l=1,2,\dots,\infty$, endowed with vector fields
\begin{equation*}
  \tilde{D}_{x^i} = D_{x^i} + \sum_\alpha X_i^\alpha\frac{\partial}{\partial
    w^\alpha},\qquad i=1,\dots,n,
\end{equation*}
that pair-wise commute, where~$w^\alpha$ are coordinates in~$\mathbb{R}^l$
called nonlocal variables. Equivalently, $\tilde{\mathcal{E}}$ may be regarded
as an overdetermined system
\begin{equation*}
  w_{x^i}^\alpha = X_i^\alpha,\qquad i=1,\dots,n,\quad \alpha = 1,\dots,l,
\end{equation*}
whose compatibility conditions are consequences of~$\mathcal{E}$. For a linear
differential operator
$\Delta = \left(\sum_\sigma a_{ij}^\sigma D_\sigma\right)$ on~$\mathcal{E}$ we
denote by
$\tilde{\Delta} = \left(\sum_\sigma a_{ij}^\sigma \tilde{D}_\sigma\right)$ its
lift to~$\tilde{\mathcal{E}}$.

Coverings~$\tau_1$ and~$\tau_2$ are said to be equivalent if there exists a
diffeomorphism $g\colon \tilde{\mathcal{E}}_1 \to \tilde{\mathcal{E}}_2$ such
that $\tau_2\circ g = \tau_1$ and which takes~$\tilde{\mathcal{C}}_1$
to~$\tilde{\mathcal{C}}_2$, where~$\tilde{\mathcal{C}}_i$ is the span of the
fields~$\tilde{D}_{x^i}$ on~$\tilde{\mathcal{E}}_i$.

A nonlocal symmetry of~$\mathcal{E}$ in the covering~$\tau$ is a vector field
\begin{equation*}
  \Sv_\Phi = \sum_\Int\tilde{D}_\sigma(\phi^j)\frac{\partial}{\partial
    u_\sigma^j} +
  \sum_\alpha \psi^\alpha\frac{\partial}{\partial w^\alpha},
\end{equation*}
where $\phi=(\phi^1,\dots,\phi^m)$ and~$\psi^\alpha$ are functions
on~$\tilde{\mathcal{E}}$ satisfying the system
\begin{align*}
  &\tilde{\ell}_F(\phi)=0,\\
  &\tilde{D}_{x^i}(\psi^\alpha) = \tilde{\ell}_{X_i^\alpha}(\phi) + \sum_\beta
    \frac{\partial X_i^\alpha}{\partial w^\beta}
\end{align*}
for all~$i$ and~$\alpha$. Any nonlocal symmetry is identified with the
collection $\Phi = (\phi,\psi^1,\dots,\psi^\alpha,\dots)$. A symmetry is
called a lift of a local symmetry if the component~$\phi$ is a function
on~$\mathcal{E}$; it is called invisible if~$\phi=0$. Nonlocal symmetries form
a Lie algebra denoted by~$\sym_\tau\mathcal{E}$.

\subsection{Coverings over Equation~\eqref{eq:1} and symmetries}
\label{sec:cover-over-equat}

Let us assume that $w=w(\lambda)$ in Equation~\eqref{eq:4} and consider the
expansion
\begin{equation*}
  w=\sum_{i=-\infty}^{+\infty}\lambda^{-i} w_i.
\end{equation*}
This leads to the infinite-dimensional covering
\begin{equation}\label{eq:6}
  \begin{array}{rcl}
    w_{i,x} &=& w_{i-1,y} + u_x w_{i-1,x} ,\\
    w_{i,y} &=& w_{i-1,t} + u_y w_{i-1,x},
  \end{array}
\end{equation}
where $i\in\mathbb{Z}$.

Recall now that the space~$\sym\mathcal{E}$ for Equation~\eqref{eq:1} is
spanned by the functions
\begin{gather*}
   \theta_1=2x-yu_x,\qquad   \theta_2=3u-2xu_x-yu_y,\\
   \theta_3(T)=Tu_y+T'(yu_x-x)-\frac{1}{2}T''y^2, \qquad   \theta_4(T)=Tu_x-T'y,\\
   \theta_5(T)=Tu_t+T'(xu_x+yu_y-u)+\frac{1}{2}T''(y^2u_x-2xy)-
  \frac{1}{6}T'''y^3,  \qquad
  \theta_6(T)=T,
\end{gather*}
where~$T$ is a function in~$t$ and `prime' denotes the $t$-derivatives. In
what follows, we shall need the following

\begin{proposition}\label{prop:cover-over-equat-1}
  The symmetries $\theta_1$\textup{,} $\theta_2$\textup{,} and
  $\theta_5 = \theta_5(1)= u_t$ can be lifted to the covering~\eqref{eq:6}.
\end{proposition}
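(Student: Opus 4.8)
The plan is to verify the defining equations of a nonlocal symmetry for each of the three candidate functions $\phi=\theta_1,\theta_2,u_t$ in the covering~\eqref{eq:6}. By the definition recalled in Section~\ref{sec:basics}, lifting a local symmetry $\phi\in\sym\mathcal{E}$ to the covering means finding nonlocal components $\psi_i$ (one for each nonlocal variable $w_i$, $i\in\mathbb{Z}$) such that the compatible system
\begin{equation*}
  \tilde{D}_x(\psi_i) = \tilde{\ell}_{X_{i,x}}(\phi),\qquad
  \tilde{D}_y(\psi_i) = \tilde{\ell}_{X_{i,y}}(\phi)
\end{equation*}
holds, where $X_{i,x}=w_{i-1,y}+u_xw_{i-1,x}$ and $X_{i,y}=w_{i-1,t}+u_yw_{i-1,x}$ are the right-hand sides of~\eqref{eq:6} and the tilde denotes the lift of operators to $\tilde{\mathcal{E}}$. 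Since $\phi$ does not involve nonlocal variables, the term $\sum_\beta\partial X_i^\alpha/\partial w^\beta$ contributes the "shift" part $\phi_{w_{i-1}}$-type corrections; concretely $\tilde{\ell}_{X_{i,x}}(\phi)=\tilde{D}_y(w_{i-1,x}\text{-coeff}\cdot\phi)+\dots$, and one must track how $\ell$ acts on the coefficients $u_x$, $u_y$ appearing in $X_i$.

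First I would compute, for each of the three symmetries, the action of the linearization: since $\theta_1=2x-yu_x$ depends only on $u_x$, $\theta_2=3u-2xu_x-yu_y$ depends on $u,u_x,u_y$, and $u_t$ depends on $u_t$, the operators $\tilde{\ell}_{X_{i,x}}$ and $\tilde{\ell}_{X_{i,y}}$ applied to them produce explicit expressions in the jets of $u$ and in $w_{i-1,x}$, $w_{i-1,y}$, $w_{i-1,t}$. The key idea is that the natural ansatz for the nonlocal component is a \emph{linear} combination of the nonlocal variables themselves with coefficients built from $x,y,t$ and low-order jets — mirroring the structure of $\theta_1,\theta_2,\theta_5$ — together, possibly, with a shift of the index $i\mapsto i+1$ or $i-1$ reflecting the way $\lambda$-scaling interacts with the expansion $w=\sum\lambda^{-i}w_i$. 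For instance, the scaling symmetries $\theta_1,\theta_2$ act on $\lambda$ (the Lax parameter carries weight), so one expects $\psi_i$ to be of the form $c_1\,i\,w_i + (\text{local coefficient})\,w_{i\pm1} + \dots$; for $u_t$ one expects $\psi_i = w_{i+1,?}$ or a $t$-translate.

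The concrete steps are: (1) write down $\tilde{D}_x,\tilde{D}_y$ on $\tilde{\mathcal E}$ using~\eqref{eq:6}; (2) for $\phi=\theta_1$, guess $\psi_i=a_i w_i + b_i w_{i-1,x}+\dots$ with $a_i,b_i$ functions of $x,y,t$, substitute into the two equations $\tilde D_x(\psi_i)=\tilde\ell_{X_{i,x}}(\theta_1)$, $\tilde D_y(\psi_i)=\tilde\ell_{X_{i,y}}(\theta_1)$, and solve the resulting linear recurrence/ODE system for the coefficients, using the covering relations to rewrite $w_{i-1,y}$, $w_{i-1,t}$ in terms of $w_{i,x}$ etc.; (3) repeat for $\theta_2$ and for $u_t$; (4) check consistency of the two equations for each $i$ (this is automatic because $\phi$ is a genuine symmetry and the covering is compatible, but it should be exhibited). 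The main obstacle I anticipate is step (2)–(3): correctly identifying the index shift and the precise local coefficients so that the $x$- and $y$-equations are simultaneously satisfied for \emph{all} $i\in\mathbb Z$ — i.e. solving the infinite recurrence in closed form rather than term-by-term. A secondary subtlety is bookkeeping: the linearization $\tilde\ell_{X_i}$ acts both on the explicit $u$-jets inside $\theta$ and, through the chain rule, on the $u_x,u_y$ multiplying $w_{i-1,x}$ in the covering data, so one must be careful not to drop the "$w_{i-1,x}\cdot\ell_{u_x}(\phi)$"-type cross terms. Once the closed form of $\psi_i$ is in hand, the verification that it defines a bona fide element of $\sym_\tau\mathcal E$ is routine.
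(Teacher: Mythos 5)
Your overall strategy --- write down the defining equations of a nonlocal symmetry in the covering~\eqref{eq:6} and produce the nonlocal components by an explicit ansatz --- is exactly the route the paper takes, and your structural intuition (an index-dependent coefficient times a possibly shifted nonlocal variable, plus local coefficients times first derivatives of the $w_i$, with $\theta_5$ lifting via a $t$-derivative) is essentially right. The problem is that the proposal stops at the plan: the entire content of the proposition is the existence of the components, and you never exhibit them or verify the defining equations for them. You yourself flag ``solving the infinite recurrence in closed form'' as the main open obstacle, which means the argument is not complete as written. The actual lifts are
\begin{equation*}
  \theta_1^i = -y w_{i,x} + (i+2) w_{i-1},\qquad
  \theta_2^i = -2x w_{i,x} - y w_{i,y} + (i+3) w_i,\qquad
  \theta_5^i = w_{i,t},
\end{equation*}
and the observation that dissolves your worry about the recurrence is that these formulas are \emph{uniform in $i$}: the index enters only linearly as a coefficient, so there is no infinite recurrence to solve --- one verifies a single pair of identities for generic $i$ using the relations~\eqref{eq:6}.

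A secondary but real issue is that the defining system you display, $\tilde{D}_x(\psi_i)=\tilde{\ell}_{X_{i,x}}(\phi)$, omits the contribution of the $w$-dependence of the covering data, i.e.\ the terms $\tilde{D}_y(\psi_{i-1})+u_x\tilde{D}_x(\psi_{i-1})$ (resp.\ $\tilde{D}_t(\psi_{i-1})+u_y\tilde{D}_x(\psi_{i-1})$) coming from the fact that $X_{i,x}=w_{i-1,y}+u_xw_{i-1,x}$ depends on the nonlocal variable $w_{i-1}$. The equations that must be checked are
\begin{equation*}
  \tilde{D}_x(\psi^i)=\tilde{D}_y(\psi^{i-1})+u_x\tilde{D}_x(\psi^{i-1})+w_{i-1,x}\tilde{D}_x(\phi),\qquad
  \tilde{D}_y(\psi^i)=\tilde{D}_t(\psi^{i-1})+u_y\tilde{D}_x(\psi^{i-1})+w_{i-1,x}\tilde{D}_y(\phi),
\end{equation*}
the unreduced analogue of~\eqref{eq:11}. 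You acknowledge these cross terms in prose, but as displayed your system couples each $\psi^i$ only to $\phi$ and not to $\psi^{i-1}$, and an ansatz substituted into it would not close. With the corrected system and the formulas above, the verification is the routine check the paper leaves implicit.
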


\begin{proof}
  Let us set
  \begin{align*}
    \theta_1^i&= -y w_{i,x} + (i + 2) w_{i-1},\\
    \theta_2^i&= -2 x w_{i,x} - y w_{i,y} + (i + 3) w_i,\\
    \theta_5^i&= w_{i,t}.
  \end{align*}
  Then
  \begin{equation*}
    \Theta_j=(\theta_j,\dots,\theta_j^i,\dots),\qquad j=1,2,5,\quad
    i\in\mathbb{Z}, 
  \end{equation*}
  are the desired lifts.
\end{proof}

\subsection{Reductions}
\label{sec:reductions}
Using Proposition~\ref{prop:cover-over-equat-1}, we now state the following
\begin{proposition}
  \label{prop:reductions-2}
  Reduction of~\eqref{eq:6} with respect to the symmetry $\Theta_5+\Theta_1$
  leads to the covering
  \begin{equation}
    \label{eq:7}
    \begin{array}{rcl}
      w_{i,x} &=& w_{i-1,y} + u_x w_{i-1,x} ,\\
      w_{i,y} &=& (u_y + y)w_{i-1,x} - (i + 1)w_{i-2}
    \end{array}
  \end{equation}
  over Equation~\eqref{eq:2}\textup{,} while reduction with respect to
  $\Theta_5 + \Theta_2$ leads to the covering
  \begin{equation}
    \label{eq:8}
    \begin{array}{rcl}
      w_{i,x} &=& w_{i-1,y} + u_x w_{i-1,x} ,\\
      w_{i,y} &=& (u_y + 2x)w_{i-1,x} + yw_{i-1,y} - (i + 2)w_{i-1}
    \end{array}
  \end{equation}
  over Equation~\eqref{eq:3}.
\end{proposition}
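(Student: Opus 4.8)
The plan is to perform the symmetry reduction explicitly at the level of the covering equations, treating the nonlocal variables $w_i$ as genuine additional dependent variables of the enlarged system \eqref{eq:6}. Recall that reducing an equation with covering by a (lifted) symmetry $\Theta$ means restricting to the subspace where $\Theta$ acts trivially, i.e.\ passing to the quotient by the flow of $\Ev_{\Theta}$; concretely, one picks invariant coordinates of the flow and rewrites \eqref{eq:6} in them. For the reduction by $\Theta_5+\Theta_1 = (u_t + 2x - yu_x,\ \dots,\ w_{i,t} - yw_{i,x} + (i+2)w_{i-1},\ \dots)$, the vanishing of the base component $\phi = u_t + 2x - yu_x$ together with the known reduction from \cite{B-K-M-V-2014} already produces Equation~\eqref{eq:2}; so the only new content is to track what happens to the nonlocal part.

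First I would introduce the reduced nonlocal variables and express $w_{i,t}$ through them. Imposing invariance under $\Ev_{\Theta_5+\Theta_1}$ forces, on the reduced manifold, the relation coming from the $w_i$-component of the symmetry: $w_{i,t} = yw_{i,x} - (i+2)w_{i-1}$. This is the crucial substitution — it eliminates all $t$-derivatives of the nonlocal variables in favour of $x$-derivatives and lower-index $w$'s. Then I would substitute this expression for $w_{i-1,t}$ into the second line of \eqref{eq:6}, namely $w_{i,y} = w_{i-1,t} + u_y w_{i-1,x}$, obtaining
\begin{equation*}
  w_{i,y} = \bigl(y w_{i-1,x} - (i+1)w_{i-2}\bigr) + u_y w_{i-1,x}
          = (u_y + y)w_{i-1,x} - (i+1)w_{i-2},
\end{equation*}
which is exactly the second equation of \eqref{eq:7}; note the index shift $i\mapsto i-1$ in $(i+2)w_{i-1}$ giving $(i+1)w_{i-2}$. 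The first equation of \eqref{eq:7} is identical to the first line of \eqref{eq:6} and survives the reduction unchanged because it contains no $t$-derivatives. One should also check consistency: the compatibility $\tilde D_x\tilde D_y = \tilde D_y \tilde D_x$ of the reduced system \eqref{eq:7} must follow from Equation~\eqref{eq:2}, and the relation $w_{i,t} = yw_{i,x}-(i+2)w_{i-1}$ used to define the reduction must itself be compatible with \eqref{eq:6} — this is guaranteed abstractly because $\Theta_5+\Theta_1$ is a symmetry of the covering by Proposition~\ref{prop:cover-over-equat-1}, but it is worth verifying directly.

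For the second reduction, by $\Theta_5+\Theta_2 = (u_t + 3u - 2xu_x - yu_y,\ \dots)$, the reasoning is parallel: the base reduction gives Equation~\eqref{eq:3}, and the $w_i$-component $w_{i,t} = 2x w_{i,x} + y w_{i,y} - (i+3)w_i$ of the lifted symmetry is imposed on the reduced manifold. Substituting $w_{i-1,t} = 2x w_{i-1,x} + y w_{i-1,y} - (i+2)w_{i-1}$ into $w_{i,y} = w_{i-1,t} + u_y w_{i-1,x}$ yields
\begin{equation*}
  w_{i,y} = (u_y + 2x)w_{i-1,x} + y w_{i-1,y} - (i+2)w_{i-1},
\end{equation*}
which is the second line of \eqref{eq:8}, the first line again being inherited verbatim.

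The main obstacle I anticipate is purely bookkeeping rather than conceptual: getting the index shifts and the numerical coefficients $(i+2)$, $(i+1)$, $(i+3)$ exactly right after the substitution, and making sure the choice of invariant coordinates on the quotient is the one under which the formulas take the stated clean form (a different but equivalent choice would differ by a change of nonlocal variables). A secondary point requiring care is that the reduced system \eqref{eq:7} (resp.\ \eqref{eq:8}) is genuinely a covering over \eqref{eq:2} (resp.\ \eqref{eq:3}) — i.e.\ its compatibility conditions are consequences of the reduced PDE and not of something stronger — but this follows automatically from the fact that we are taking a quotient of a covering by a symmetry, so no independent argument is needed beyond the direct check mentioned above.
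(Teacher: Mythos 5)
Your computation is correct and is exactly the argument the paper leaves implicit: the proposition is stated without proof precisely because it follows by setting the $w_i$-components of the lifted symmetries from Proposition~\ref{prop:cover-over-equat-1} to zero, solving for $w_{i,t}$, and substituting (with the index shift $i\mapsto i-1$) into the second line of~\eqref{eq:6}, which is what you do. Your index bookkeeping — $(i+1)w_{i-2}$ in the first case and $(i+2)w_{i-1}$ in the second — checks out, and your remark that compatibility of the reduced system is guaranteed because one is reducing a covering by one of its symmetries is the right justification for the "covering over \eqref{eq:2} (resp.\ \eqref{eq:3})" claim.
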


Let us finally describe the algebras of local symmetries for
Equations~\eqref{eq:2} and~\eqref{eq:3}. Direct computations show that
$\sym\mathcal{E}_1$ for the first equation is spanned by the functions
\begin{equation}\label{eq:12}
  \phi_{-4} = 1,\quad \phi_{-3} = u_x,\quad \phi_{-2}= -u_y - y,\quad
  \phi_{-1} = -2 x + yu_x,\quad
  \phi_0 = 4 u - 3 xu_x - 2 yu_y,
\end{equation}
while in the second case we have the following generators of the symmetry
algebra $\sym\mathcal{E}_2$:
\begin{equation*}
  \gamma_{-3} = 1,\quad \gamma_{-2}=-y-\frac{1}{2}u_x,\quad \gamma_{-1} =  y^2  - 2
  x + 2 yu_x - 2 u_y,\quad \gamma_0= 3 u - 2x u_x - yu_y.
\end{equation*}
It is also easily seen that
\begin{equation*}
  \left.\begin{array}{l}
    \text{for Equation~\eqref{eq:2}}\\
     \protect{[\phi_i,\phi_j]= }
    \begin{cases}
      (j-i)\phi_{i+j}&\text{if } i+j\geq -4,\\
      0&\text{otherwise},
    \end{cases}
  \end{array}\qquad\right|\qquad
  \begin{array}{l}
    \text{for Equation~\eqref{eq:3}}\\
    \protect{[\gamma_i,\gamma_j]= }
    \begin{cases}
      (j-i)\gamma_{i+j}&\text{if } i+j\geq -3,\\
      0&\text{otherwise}.
    \end{cases}
  \end{array}
\end{equation*}
Denote by~$\mathfrak{W}_0\subset\mathfrak{W}$ the subalgebra in the Witt
algebra spanned by the fields~$e_i$ with~$i\leq 0$ and by~$\mathfrak{W}_k$ its
ideal spanned by~$e_i$, $i\leq k < 0$.

\begin{proposition}
  \label{prop:reductions-1}
  For Equation~\eqref{eq:2}\textup{,} one has
  $\sym\mathcal{E}_1=\mathfrak{W}_0/\mathfrak{W}_5$\textup{,} while
  $\sym\mathcal{E}_2=\mathfrak{W}_0/\mathfrak{W}_4$ for Equa\-ti\-on~\eqref{eq:3}.
\end{proposition}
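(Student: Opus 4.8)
The plan is to produce, for each of the two equations, an explicit linear isomorphism between its (finite-dimensional) algebra of local symmetries and the indicated quotient of $\mathfrak{W}_0$, and then to check that this map intertwines the Lie brackets. Since the generators of $\sym\mathcal{E}_1$ and $\sym\mathcal{E}_2$ together with all of their commutators have already been written out above, the proof reduces to a bookkeeping comparison of structure constants; the substantive work, namely solving the determining equations $\ell_F(\phi)=0$ for Equations~\eqref{eq:2} and~\eqref{eq:3} to obtain these generators, is taken for granted here.

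Consider Equation~\eqref{eq:2} first. The five functions $\phi_{-4},\phi_{-3},\phi_{-2},\phi_{-1},\phi_0$ of~\eqref{eq:12} form a basis of $\sym\mathcal{E}_1$, so $\dim\sym\mathcal{E}_1=5$; on the other side, $\mathfrak{W}_5$ is the ideal of $\mathfrak{W}_0$ spanned by the fields $e_i$ with $i\le-5$, so the residue classes of $e_0,e_{-1},e_{-2},e_{-3},e_{-4}$ constitute a basis of $\mathfrak{W}_0/\mathfrak{W}_5$, which is therefore also $5$-dimensional. I would define $F\colon\sym\mathcal{E}_1\to\mathfrak{W}_0/\mathfrak{W}_5$ on the generators by $F(\phi_i)=\bar e_i$ for $-4\le i\le 0$ (the bar denoting the class in the quotient) and extend it linearly; it is then manifestly a linear isomorphism. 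It remains to verify $F([\phi_i,\phi_j])=[\bar e_i,\bar e_j]$ for all $i,j\in\{-4,\dots,0\}$. Since $[e_i,e_j]=(j-i)e_{i+j}$ in $\mathfrak{W}$, we get $[\bar e_i,\bar e_j]=(j-i)\bar e_{i+j}$, and $\bar e_{i+j}=0$ precisely when $i+j\le-5$, i.e.\ when $i+j<-4$. This agrees, term by term and under $F$, with the relation $[\phi_i,\phi_j]=(j-i)\phi_{i+j}$ for $i+j\ge-4$ and $[\phi_i,\phi_j]=0$ otherwise recorded above, so $F$ is a Lie algebra isomorphism, proving $\sym\mathcal{E}_1\cong\mathfrak{W}_0/\mathfrak{W}_5$.

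For Equation~\eqref{eq:3} I would repeat the same argument verbatim: $\gamma_{-3},\gamma_{-2},\gamma_{-1},\gamma_0$ span $\sym\mathcal{E}_2$, the classes of $e_0,e_{-1},e_{-2},e_{-3}$ form a basis of $\mathfrak{W}_0/\mathfrak{W}_4$, and the assignment $\gamma_i\mapsto\bar e_i$ ($-3\le i\le 0$) respects brackets by comparison with $[\gamma_i,\gamma_j]=(j-i)\gamma_{i+j}$ for $i+j\ge-3$ and $0$ otherwise. I do not anticipate any genuine obstacle: the only point requiring care is the indexing, namely checking that the quotient is taken by the ideal which leaves exactly the right five (resp.\ four) generators, and this choice is forced by the dimension count together with the ``$0$ beyond a threshold'' shape of the brackets.
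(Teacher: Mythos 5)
Your argument is correct and is essentially the proof the paper has in mind: the proposition is stated as an immediate consequence of the displayed generators and their commutation relations, and you simply spell out the obvious index-preserving isomorphism $\phi_i\mapsto\bar e_i$ (resp.\ $\gamma_i\mapsto\bar e_i$) together with the dimension count and the matching of structure constants. You also correctly read $\mathfrak{W}_5$ (resp.\ $\mathfrak{W}_4$) as the ideal spanned by the $e_i$ with $i\le-5$ (resp.\ $i\le-4$), which is what the dimension count forces despite the slightly garbled condition ``$i\le k<0$'' in the paper's definition.
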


\subsection{The coverings $\tau^p$ and $\sigma^p$}
\label{sec:coverings-taup}

Consider the covering~\eqref{eq:7} over Equation~\eqref{eq:2}. Choose an
integer~$p\in\mathbb{Z}$ and assume that~$w_i=0$ for all~$i<p$. To proceed
further, it is convenient to relabel nonlocal variables by
setting~$w_{p+i}=r_{i-4}^p$. Then~$r_i^p=0$ for~$i<-4$ and
\begin{align*}
  &r_{-4,x}^p=0,&&r_{-4,y}^p=0;\\
  &r_{-3,x}^p=0,&&r_{-3,y}^p=0,
\end{align*}
and without loss of generality one can set~$r_{-4}^p=1$, $r_{-3}^p=0$. Then
\begin{equation*}
  r_{-2}^p=-(p + 3)y,\qquad r_{-1}^p = -(p + 3)x,\qquad r_0^p = -(p + 3)u +
  \frac{1}{2}(p + 3)(p + 4)y^2,
\end{equation*}
while
\begin{equation}
  \label{eq:9}
  \begin{array}{rcl}
    r_{i,x}^p&=&r_{i-1,y}^p + u_xr_{i-1,x}^p,\\[3pt]
    r_{i,y}^p&=&(u_y+y)r_{i-1,x}^p-(p+i+5)r_{i-2}^p,
  \end{array}
\end{equation}
for all~$i\geq 1$. We denote this covering by~$\tau^p$.

In a similar way, for Equation~\eqref{eq:3} and its covering~\eqref{eq:8} we
have~$r_i^p=0$ for~$i<-3$, $r_{-3}^p = 1$, and
\begin{equation*}
  r_{-2}^p = -(p + 3)y,\quad r_{-1}^p = -(p + 3)x +
  \frac{1}{2}(p + 3)^2y^2,\quad r_0^p = - (p + 3)u + (p + 3)^2 xy -
  \frac{1}{6} (p + 3)^3 y^3,
\end{equation*}
while
\begin{equation}
  \label{eq:10}
  \begin{array}{rcl}
    r_{i,x}^p&=& r_{i-1,y}^p + u_x r_{i-1,x}\\
    r_{i,y}^p&=& (u_y + 2x)r_{i-1,x}^p + yr_{i-1,y}^p - (p + i + 5)r_{i-1}^p
  \end{array}
\end{equation}
for~$i\geq1$.
This covering will be denoted by $\sigma^p$.

\section{The main result}
\label{sec:main-result}

We prove the main result of the paper in this section, which states that in
coverings~$\tau^p$, naturally associated with~\eqref{eq:7} and~\eqref{eq:8}
the algebras of nonlocal symmetries for both Equations~\eqref{eq:2}
and~\eqref{eq:3} are isomorphic to the Witt algebra~$\mathfrak{W}$. The case
of Equation~\eqref{eq:2} is considered in detail, while for
Equation~\eqref{eq:3} we provide a sketch of proofs only.

\subsection{Equivalence of the coverings $\tau^p$}
\label{sec:equiv-cover-taup}

The first step of the proof is to establish the equivalence of
different~$\tau^p$ to each other.

\begin{proposition}
  \label{prop:equiv-cover-taup-1}
  The covering~$\tau^p$ is equivalent to~$\tau^{-4}$ for any~$p$.
\end{proposition}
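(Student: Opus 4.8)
The plan is to build an explicit equivalence $g\colon \tilde{\mathcal{E}}^p \to \tilde{\mathcal{E}}^{-4}$ by exhibiting the nonlocal variables $r_i^{-4}$ of $\tau^{-4}$ as functions of the nonlocal variables $r_j^p$ of $\tau^p$ (and the internal variables of $\mathcal{E}_1$), in such a way that the total-derivative structure is preserved. Since $\tau^{-4}$ is the covering obtained by setting $p=-4$ in \eqref{eq:9}, its defining relations are $r_{i,x}^{-4}=r_{i-1,y}^{-4}+u_xr_{i-1,x}^{-4}$ and $r_{i,y}^{-4}=(u_y+y)r_{i-1,x}^{-4}-(i+1)r_{i-2}^{-4}$, with $r_{-4}^{-4}=1$, $r_{-3}^{-4}=0$, $r_{-2}^{-4}=y$, $r_{-1}^{-4}=x$, $r_0^{-4}=-u$, and so on. The first thing I would do is compare these with \eqref{eq:9} for general $p$ and look for a ``shift'' relation: the recursion for $\tau^p$ at level $i$ looks like the recursion for $\tau^{-4}$ at level $i$ but with the coefficient $(p+i+5)$ in place of $(i+1)$. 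This strongly suggests that the map should act diagonally in a graded sense, i.e. $r_i^p = c_i\,r_{i}^{-4} + (\text{lower-order corrections})$ for suitable constants $c_i$, or conversely.

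The key step is to pin down the precise change of variables. Since $\tau^p$ is essentially \eqref{eq:7} with the truncation $w_i=0$ for $i<p$, and $\tau^{-4}$ is the same with truncation at $i<-4$, the natural guess is that the equivalence is induced by the index shift $i\mapsto i-(p+4)$ on the $w$'s, i.e. that $r_i^p$ and $r_i^{-4}$ both arise from the \emph{same} family of potentials in \eqref{eq:7} but read off starting from different places. Concretely I would try to show that the functions $r_i^{-4}$, viewed inside $\tau^p$ via the known explicit formulas for $r_{-2}^p,r_{-1}^p,r_0^p$ and the recursion \eqref{eq:9}, can be expressed polynomially in the $r_j^p$ with $j\le i$; one then checks that these expressions satisfy exactly the $\tau^{-4}$-relations by a direct computation using $\tilde D_x$, $\tilde D_y$ on \eqref{eq:9}. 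Because everything is polynomial and the recursions are triangular in the index $i$, this check is a finite (inductive) verification at each level, and the inductive hypothesis carries the lower $r_j$'s through. I would set up the induction on $i$, base case being the already-computed $r_{-2}^p,r_{-1}^p,r_0^p$ and their counterparts for $p=-4$ (these differ only by the overall factor $p+3$ and a quadratic term, which is the seed of the correction terms), and inductive step being an application of \eqref{eq:9} twice.

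The main obstacle I anticipate is getting the constants right: the mismatch between $(p+i+5)$ and $(i+1)$ means the naive rescaling $r_i^p = (\text{const})\,r_i^{-4}$ cannot work for all $i$ simultaneously, so the equivalence must mix different levels, and one has to find the exact combinatorial coefficients (likely binomial-type, coming from iterating the shift by $p+4$). A clean way to sidestep the bookkeeping is to argue more structurally: regard both $\tau^p$ and $\tau^{-4}$ as sub-coverings of the ambient \eqref{eq:7}, observe that the symmetry $\phi_{-4}=1$ (a shift in the nonlocal tower) together with the scaling symmetry $\phi_0$ generates flows that move $p$, and that these flows are invertible; then invoke the general fact that coverings related by a symmetry flow of the ambient covering are equivalent. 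I would first attempt the structural argument and fall back on the explicit triangular induction if the flow turns out not to be globally defined. Either way, once the change of variables is written down, verifying $\tau_2\circ g=\tau_1$ and that $g_*$ sends $\tilde{\mathcal C}^p$ to $\tilde{\mathcal C}^{-4}$ is immediate from the construction.
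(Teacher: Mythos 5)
Your overall strategy --- an explicit change of nonlocal variables, triangular in the level $i$, relating $r_i^p$ to the $r_j^{-4}$ and verified level by level using \eqref{eq:9} --- is exactly the shape of the paper's argument, and your anticipation that the naive rescaling $r_i^p=\const\cdot r_i^{-4}$ must be corrected by lower-weight terms with factorial/binomial-type coefficients is correct. However, the proposal stops precisely where the actual content of the proof begins. The crux is not that one \emph{would} look for such a map, but that the correction terms can be determined consistently at every level; the paper settles this by exhibiting a closed formula, $r_i^p=-(p+3)\bigl(s_i-(p+4)d_i\bigr)$ with $s_i=r_i^{-4}$, where the $d_i$ are generated from $y^{k+2}/(k+2)!$ by iterating an explicit vector field $\mathcal{X}$ on $(x,y,u,s_1,s_2,\dots)$. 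Your text only asserts that an induction ``would'' produce the coefficients and names the bookkeeping as the main obstacle without resolving it, so the existence of the equivalence is not actually established. Your fallback ``structural'' argument is also not a safe landing: $p$ enters \eqref{eq:9} as a discrete label in the coefficients $(p+i+5)$, and the claim that the flows of $\phi_{-4}$ and $\phi_0$ ``move $p$'' and that coverings connected by such flows are equivalent is neither made precise nor verified; as stated it is a heuristic, not a proof.

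A second, concrete omission is the case $p=-3$. Any formula of the type you propose (and the paper's formula in particular) degenerates there, since the overall factor $p+3$ vanishes --- note that already $r_{-2}^p=-(p+3)y$, $r_{-1}^p=-(p+3)x$ collapse to constants. The paper handles this separately: for $p=-3$ the variable $r_{-3}$ satisfies $r_{-3,x}=r_{-3,y}=0$ and decouples from the remaining equations, so it may be set to zero, reducing the covering to the generic situation. Your proof plan, which treats all $p$ uniformly, would fail at $p=-3$ without this extra step. (Also a small slip: for $p=-4$ one gets $r_0^{-4}=u$, not $-u$.)
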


\begin{proof}
  We distinguish the two cases: $p\neq -3$ and $p=3$.

  \textbf{The case~$p\neq -3$.} Introduce the notation~$s_i=r_i^{-4}$ and
  consider the vector field
  \begin{equation*}
    \mathcal{X}=x\frac{\partial}{\partial y} + 2u\frac{\partial}{\partial x} +
    3s_1\frac{\partial}{\partial u} + \sum_{i\geq
      1}(i+3)s_{i+1}\frac{\partial}{\partial s_i}.
  \end{equation*}
  Let us define the quantities~$Q_{k,j}$, $k\geq0$, by
  \begin{equation*}
    Q_{k,0} = \frac{1}{(k+2)!}y^{k+2},\qquad Q_{k,j+1} =
    \frac{1}{j}\mathcal{X}(Q_{k,j})
  \end{equation*}
  and assume~$Q_{k,j}=0$ for~$j<0$. Set
  \begin{equation*}
    d_i = \sum_{k\geq 0}(-1)^k\frac{(p+k+4)!}{(p+4)!}Q_{k,i-2k}.
  \end{equation*}
  Then the transformation
  \begin{equation*}
    r_i^p = -(p+3)(s_i-(p+4)d_i),\qquad i\geq1,
  \end{equation*}
  is the desired equivalence.

  \textbf{The case~$p= -3$.}  The first three pairs of defining equations in
  this case are
  \begin{equation*}
    r_{i,x}=0,\quad r_{i,y}=0,\qquad i=-3,-2,-1,
  \end{equation*}
  while all the rest ones do not contain the variable~$r_{-3}$. Thus we
  that~$r_{-3}$ is a constant whose value does not influence the subsequent
  computations. So, we may set~$r_{-3}=0$ and this reduces the case under
  consideration to~$p>-3$.
\end{proof}

\subsection{The Lie algebra of  nonlocal symmetries}
\label{sec:lie-algebra-tructure}

Let~$\Phi=(\phi,\phi^1,\dots,\phi^i,\dots)$ be a nonlocal symmetry of
Equation~\eqref{eq:2} in the covering~\eqref{eq:9}. The defining equations for
the components of~$\Phi$ are
\begin{align}\nonumber
  \tilde{\ell}_F(\phi)&\equiv \tilde{D}_y^2(\phi) - (u_y +
  y)\tilde{D}_x^2(\phi) - u_{xx}\tilde{D}_y(\phi)  + u_x
  \tilde{D}_x\tilde{D}_y(\phi)  + u_{xy}\tilde{D}_x(\phi)=0, \\\label{eq:11}
  \tilde{D}_x(\phi^i)&=\tilde{D}_y(\phi^{i-1}) + u_x\tilde{D}_x(\phi^{i-1}) +
  r_{i-1,x}^p\tilde{D}_x(\phi), \\ \nonumber
  \tilde{D}_y(\phi^i)&=(u_y + y)\tilde{D}_x(\phi^{i-1}) - (p + i +
  1)\phi^{i-2} + r_{i-1,x}^p\tilde{D}_y(\phi).
\end{align}
We need a number of auxiliary results to describe the
algebra~$\sym_{\tau^p}\mathcal{E}_1$. As a first step, it is convenient to
assign weights to the internal coordinates in~$\tilde{\mathcal{E}}_1$ in such a
way that all polynomial objects become homogeneous with respect to these
weights. Let us set
\begin{equation*}
  \abs{x} = 3,\qquad \abs{y} = 2,\qquad \abs{u} = 4.
\end{equation*}
Then
\begin{equation*}
  \abs{u_x} = \abs{u} - \abs{x} = 1,\qquad \abs{u_y} = \abs{u} - \abs{y} = 2,
\end{equation*}
etc., and
\begin{equation*}
  \abs{r_i^p} = i+4,\qquad i\geq1.
\end{equation*}
The weight of a monomial is the sum of weights of its factors and the weight
of a vector field $R\partial/\partial \rho$ is $\abs{R}-\abs{\rho}$. In
particular, one has~$\abs{\phi_k}=k$, $k=-4,\dots,0$ for the local symmetries
presented in~\eqref{eq:12}.

\begin{lemma}
  \label{lem:lie-algebra-nonlocal}
  The local symmetry~$\phi_{-1}$ can be lifted to the covering~$\tau^{-4}$.
\end{lemma}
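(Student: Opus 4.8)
The plan is to construct the lift explicitly by exploiting the weight grading introduced just above. The symmetry $\phi_{-1}=-2x+yu_x$ has weight $-1$, so each nonlocal component $\phi^i$ of the sought lift $\Phi_{-1}=(\phi_{-1},\phi^1,\dots,\phi^i,\dots)$ over $\tau^{-4}$ must have weight $\abs{\phi^i}=\abs{r_i^{-4}}-1=i+3$. First I would write down the most general expression of weight $i+3$ that is polynomial in the internal coordinates $x,y,u,u_x,u_y,\dots$ and the nonlocal variables $s_j=r_j^{-4}$ (with $\abs{s_j}=j+4$), keeping in mind that at $p=-4$ the recursion~\eqref{eq:9} simplifies because the coefficient $p+i+5=i+1$ and $s_{-2}^{-4}=y$, $s_{-1}^{-4}=x$, $s_0^{-4}=-u$ (these follow by substituting $p=-4$ into the formulas for $r_{-2}^p,r_{-1}^p,r_0^p$ in Section~\ref{sec:coverings-taup}). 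A natural ansatz is $\phi^i = \alpha_i\,y\,s_{i,x}+\beta_i\,s_{i-1}$ for suitable constants $\alpha_i,\beta_i$, mirroring the shape of $\theta_1^i$ in Proposition~\ref{prop:cover-over-equat-1} and of $\phi_{-1}$ itself; one should also allow a term proportional to $x\,s_{i-1,x}$ if needed.

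The second step is to plug this ansatz into the last two equations of~\eqref{eq:11} (with $p=-4$) and verify that the linearization conditions $\tilde D_x(\phi^i)=\tilde D_y(\phi^{i-1})+u_x\tilde D_x(\phi^{i-1})+s_{i-1,x}\tilde D_x(\phi_{-1})$ and the analogous $\tilde D_y$ equation can be satisfied. Using $\tilde D_x(\phi_{-1})=-2+y u_{xx}+u_x$ (wait—more precisely $\tilde D_x(\phi_{-1})=yu_{xx}+u_x\cdot 0-2$, i.e.\ $=yu_{xx}-2+u_x\cdot?$; the exact form is obtained by applying $D_x$ to $-2x+yu_x$, giving $-2+yu_{xx}$) and $\tilde D_y(\phi_{-1})=u_x+yu_{xy}$, the right-hand sides become explicit polynomials in the $s_j$'s and their $x,y$-derivatives, which can be reduced via~\eqref{eq:9}. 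Matching coefficients of the independent monomials yields a simple two-term recursion for $\alpha_i,\beta_i$ in $i$, which I expect to solve in closed form (linearly in $i$, by analogy with the $(i+2)$ and $(i+3)$ coefficients appearing in $\theta_1^i,\theta_2^i$). The base case $i=1$ is checked by hand: $\phi^1$ must satisfy $\tilde D_x(\phi^1)=\tilde D_y(\phi^0)+u_x\tilde D_x(\phi^0)+s_{0,x}\tilde D_x(\phi_{-1})$ where $\phi^0$ is the already-known local symmetry $\phi_{-1}$ itself viewed with the right weight shift—actually $\phi^0$ corresponds to the component attached to $s_0=-u$, so one starts the induction from the explicitly computable low-order terms.

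The main obstacle is the consistency of the overdetermined system: for each $i$ the pair of equations for $\tilde D_x(\phi^i)$ and $\tilde D_y(\phi^i)$ must be compatible, i.e.\ $\tilde D_y\tilde D_x(\phi^i)=\tilde D_x\tilde D_y(\phi^i)$, which is automatic only if the right-hand sides themselves satisfy the cross-differentiation identity. This is where the covering structure—the fact that~\eqref{eq:9} has compatibility conditions that are consequences of~\eqref{eq:2}—must be used, together with $\tilde\ell_F(\phi_{-1})=0$ (true since $\phi_{-1}\in\sym\mathcal{E}_1$). I would verify this by a direct computation modulo $\mathcal{E}_1$, or, more cleanly, by noting that the two equations in~\eqref{eq:11} for the unknown increment $\phi^i-\phi^i_{\text{homogeneous particular solution}}$ reduce to the defining equations of an \emph{invisible} symmetry of a lower covering, whose solvability was implicitly established when constructing $\tau^{-4}$. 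Once the recursion is shown solvable for all $i\ge 1$, the collection $(\phi_{-1},\phi^1,\phi^2,\dots)$ is by construction a nonlocal symmetry in $\tau^{-4}$, proving the lemma.
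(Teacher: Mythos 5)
Your ansatz $\phi^i=\alpha_i\,y\,s_{i,x}+\beta_i\,s_{i-1}$ modelled on $\theta_1^i$ is exactly the paper's route: its proof simply sets $\phi_{-1}^i=yr_{i,x}^{-4}-(i+2)r_{i-1}^{-4}$ (i.e.\ $\alpha_i=1$, $\beta_i=-(i+2)$, the reduction of $-\theta_1^i$ from Proposition~\ref{prop:cover-over-equat-1}, which is natural since the reduction is by $\Theta_5+\Theta_1$ and $\phi_{-1}=-\theta_1$) and verifies~\eqref{eq:11} directly, which also disposes of your compatibility worry. Apart from the minor slip $r_0^{-4}=u$ (not $-u$), your plan is correct and essentially identical to the paper's.
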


\begin{proof}
  Let us set $\Phi_{-1}=(\phi_{-1},\phi_{-1}^1,\dots,\phi_{-1}^i,\dots)$,
  where $\phi_{-1}^i=yr_{i,x}^{-4}-(i+2)r_{i-1}^{-4}$. It is straightforward
  to check that Equations~\eqref{eq:11} are satisfied for~$p=-4$.
\end{proof}

\begin{lemma}
  \label{lem:lie-algebra-nonlocal-1}
  The local symmetry~$\phi_{-2}$ can be lifted to the
  covering~$\tau^{-5}$.
\end{lemma}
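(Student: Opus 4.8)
The plan is to follow the proof of Lemma~\ref{lem:lie-algebra-nonlocal}: produce an explicit candidate for the nonlocal part of the lift and then verify the defining system~\eqref{eq:11} directly on~$\tilde{\mathcal{E}}_1$ with $p=-5$. The two genuinely non-routine points are guessing the correct ansatz and recognising why $\tau^{-5}$, rather than some other~$\tau^p$, is the covering that accommodates~$\phi_{-2}$.

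I would arrive at the ansatz from the weight grading. Since $\abs{\phi_{-2}}=-2$ by~\eqref{eq:12} and $\abs{r_i^p}=i+4$, the $i$-th nonlocal component of any lift of~$\phi_{-2}$ must have weight $i+2$; the natural candidate of that weight, suggested by the form of the second equation in~\eqref{eq:9} and of the $\tilde{D}_x$-equation in~\eqref{eq:11}, is obtained by letting $\phi_{-2}=-(u_y+y)$ act along the covering. Explicitly, I would put
\begin{equation*}
  \Phi_{-2}=(\phi_{-2},\phi_{-2}^1,\dots,\phi_{-2}^i,\dots),\qquad
  \phi_{-2}^i=-r_{i,y}^{-5}=-(u_y+y)r_{i-1,x}^{-5}+i\,r_{i-2}^{-5}\quad(i\ge1),
\end{equation*}
the last expression being the value of $-r_{i,y}^{-5}$ read off from~\eqref{eq:9}.

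It then remains to substitute this into~\eqref{eq:11} and simplify. The linearization equation $\tilde{\ell}_F(\phi_{-2})=0$ is automatic, $\phi_{-2}$ being a local symmetry of~\eqref{eq:2}. For $i\ge2$ both the $\tilde{D}_x$- and the $\tilde{D}_y$-part of~\eqref{eq:11} collapse to identities: after replacing $r_{i,x}^{-5}$ and $r_{i,yy}^{-5}$ by their expressions from~\eqref{eq:9}, eliminating $u_{yy}$ via~\eqref{eq:2}, and using the commutativity of $\tilde{D}_x$ and $\tilde{D}_y$, the two sides of each equation coincide; for $p=-5$ the same formula moreover extends consistently to $\phi_{-2}^0=-r_{0,y}^{-5}$ and $\phi_{-2}^{-1}=-r_{-1,y}^{-5}=0$. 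The value of $p$ is pinned down at the lowest level $i=1$: there~\eqref{eq:11} involves the components $\phi_{-2}^0=-(p+3)\phi_{-2}$ and $\phi_{-2}^{-1}=0$ dictated by the explicit functions $r_0^p$, $r_{-1}^p$ of Subsection~\ref{sec:coverings-taup} rather than by the ansatz formula, and substituting everything in leaves the single numerical condition $(p+3)(p+5)=0$; since $p=-3$ gives the trivial covering~$\tau^{-3}$, one is forced to take $p=-5$, for which~\eqref{eq:11} then holds at $i=1$ as well. Thus the expected main obstacle is locating the ansatz and the distinguished value $p=-5$; once these are in hand, checking~\eqref{eq:11} amounts to routine bookkeeping with total derivatives along the covering~\eqref{eq:9}, entirely parallel to the proof of Lemma~\ref{lem:lie-algebra-nonlocal}.
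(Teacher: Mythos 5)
Your ansatz $\phi_{-2}^i=-r_{i,y}^{-5}$ is exactly the one the paper uses (its proof simply states $\Phi_{-2}=(\phi_{-2},-r_{1,y}^{-5},\dots,-r_{i,y}^{-5},\dots)$ and asserts that~\eqref{eq:11} hold for $p=-5$), and your verification, including the observation that consistency with the prescribed shadows $\phi_{-2}^{-1}=0$, $\phi_{-2}^0=-(p+3)\phi_{-2}$ forces $(p+3)(p+5)=0$ and hence $p=-5$, is correct. This is the same approach as the paper, just spelled out in more detail.
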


\begin{proof}
  We set $\Phi_{-2} =
  (\phi_{-2},-r_{1,y}^{-5},\dots,-r_{i,y}^{-5},\dots)$. Then~\eqref{eq:11} are
  fulfilled in an obvious way for~$p=-5$.
\end{proof}

\begin{lemma}
  \label{lem:lie-algebra-nonlocal-2}
  There exists a nonlocal symmetry of weight~$2$ in the covering~$\tau^{-1}$.
\end{lemma}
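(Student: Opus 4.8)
The plan is to construct the required symmetry explicitly, following the pattern already used in Lemmas~\ref{lem:lie-algebra-nonlocal} and~\ref{lem:lie-algebra-nonlocal-1}, but now starting from the local symmetry $\phi_0 = 4u - 3xu_x - 2yu_y$ of $\sym\mathcal{E}_1$, which has weight $0$ --- the extra weight $2$ must therefore come from the nonlocal components. First I would set $\Phi = (\phi_0, \phi^1, \dots, \phi^i, \dots)$ in the covering $\tau^{-1}$ (so $p=-1$) and try an ansatz for $\phi^i$ built from the admissible building blocks of weight $i+4+2 = i+6$: terms like $\alpha\, x\, r_{i,x}^{-1}$, $\beta\, y\, r_{i,y}^{-1}$, $\gamma\, u\, r_{i-1,x}^{-1}$, $\delta\, r_{i+1}^{-1}$, and possibly $y^2 r_{i-1,x}^{-1}$, $xy\, r_{i-2,x}^{-1}$, etc., with constant coefficients (independent of $i$) or coefficients that are low-degree polynomials in $i$. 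This mirrors the structure of the local symmetry $\theta_2$ and its lift $\theta_2^i$ in Proposition~\ref{prop:cover-over-equat-1}, and of the generators $\phi_{-1}^i$, $-r_{i,y}^{-5}$ used in the two preceding lemmas.

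Next I would substitute this ansatz into the defining system~\eqref{eq:11} with $p=-1$, i.e. the three relations $\tilde\ell_F(\phi_0)=0$ (automatic, since $\phi_0$ is a local symmetry), together with
\[
  \tilde D_x(\phi^i) = \tilde D_y(\phi^{i-1}) + u_x \tilde D_x(\phi^{i-1}) + r_{i-1,x}^{-1}\tilde D_x(\phi_0),
\]
\[
  \tilde D_y(\phi^i) = (u_y+y)\tilde D_x(\phi^{i-1}) - i\,\phi^{i-2} + r_{i-1,x}^{-1}\tilde D_y(\phi_0).
\]
Using the covering relations~\eqref{eq:9} for $p=-1$ to rewrite $r_{i,x}^{-1}$, $r_{i,y}^{-1}$, one reduces each of these to a polynomial identity in the internal coordinates $r_j^{-1}$, $u_x$, $u_y$, $x$, $y$ and their derivatives; collecting coefficients of independent monomials yields a finite linear recurrence (in $i$) on the unknown coefficients of the ansatz. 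I would solve this recurrence, fixing the low-order components $\phi^1, \phi^2$ by hand (where $r_{-3}^{-1}, \dots, r_0^{-1}$ are the known explicit functions listed before~\eqref{eq:9}), and then check that the chosen closed form propagates consistently to all $i\ge 1$.

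The main obstacle I anticipate is twofold. First, unlike $\phi_{-1}$ and $\phi_{-2}$, whose lifts exist on $\tau^{-4}$ and $\tau^{-5}$ respectively, here the relevant covering is $\tau^{-1}$; the shift in $p$ changes the coefficient $-(p+i+1) = -i$ in the third equation of~\eqref{eq:11} and the explicit forms of $r_{-2}^{-1}, r_{-1}^{-1}, r_0^{-1}$, so the recurrence has genuinely different structure and one must verify that it remains consistent --- in particular that no obstruction appears at the step where $\phi^{i-2}$ feeds back (this is where a naive ansatz typically fails and a correction term must be added). Second, guaranteeing that a single uniform formula works for \emph{all} $i\ge 1$, rather than just the first few, requires an inductive argument: assuming $\phi^{i-1}, \phi^{i-2}$ have the proposed form, one shows the two right-hand sides of~\eqref{eq:11} are, respectively, $\tilde D_x$ and $\tilde D_y$ of the proposed $\phi^i$, using only the covering relations and the already-established identities of lower order. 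Once the formula is in hand the verification is, as in the earlier lemmas, straightforward; the real content is discovering the correct closed form, which I expect to require allowing the coefficients in the ansatz to depend polynomially on $i$ (degree at most $2$, by weight and by analogy with $\theta_5^i = w_{i,t}$ and the cubic-in-$y$ structure of $\theta_5$).
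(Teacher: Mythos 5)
There is a genuine gap, and it lies at the very first step of your plan. You propose to take the local symmetry $\phi_0=4u-3xu_x-2yu_y$ as the $u$-component (shadow) of the sought symmetry and to let ``the extra weight $2$ come from the nonlocal components.'' This is incompatible with how the grading works. A nonlocal symmetry $\Sv_\Phi$ is homogeneous of weight $w$ only if \emph{every} component has that weight, i.e.\ $\abs{\phi}-\abs{u}=w$ and $\abs{\phi^i}-\abs{r_i^p}=w$ simultaneously; the weight of the symmetry is already fixed by its shadow. Since $\abs{\phi_0}=\abs{u}=4$, any lift of $\phi_0$ has weight $0$ --- and indeed the paper's Remark~2 identifies that lift as $\Phi_0$, a different symmetry. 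Your ansatz, with a weight-$0$ shadow and weight-$2$ tails, is not homogeneous; because the defining system~\eqref{eq:11} respects the grading, its homogeneous parts decouple, and the weight-$2$ part of your candidate would be forced to have shadow zero. So no choice of the coefficients $\alpha,\beta,\gamma,\delta,\dots$ in your tails can produce the object the lemma asserts.

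The missing idea is that the shadow of the weight-$2$ symmetry must itself be a genuinely \emph{nonlocal} function of weight $6$, built from monomials such as $r_2^{-1}$, $r_1^{-1}u_x$, $y^3$, $yu$, $xyu_x$, $y^2u_y$, $uu_y$, $x^2$. The paper's proof simply exhibits such a shadow,
\begin{equation*}
  \phi_2=-\tfrac{1}{2}\left(8 r_2^{-1}+42 y (3 y^2-2 u)+(80 x y-7 r_1^{-1}) u_x
  +2 (7 y^2+6 u) u_y-50 x^2-96 y^3\right),
\end{equation*}
together with explicit tails $\phi_2^i$ linear in $r_{i,x}^{-1}$, $r_{i,y}^{-1}$, $r_{i+2}^{-1}$, $r_i^{-1}$, $r_{i-1}^{-1}$ with coefficients affine in $i$, and verifies~\eqref{eq:11} directly. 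Your general methodology (weight-homogeneous ansatz, substitution into the defining system, solving the resulting recurrences, induction on $i$) is the right discovery procedure and matches the shape of the paper's answer for the tails; but as written it would be applied to the wrong ansatz space. To repair the argument you must enlarge the ansatz to allow the shadow to depend on $r_1^{-1}$ and $r_2^{-1}$, and then your first equation $\tilde{\ell}_F(\phi)=0$ is no longer ``automatic'' --- it becomes a nontrivial condition that must be solved together with the tail equations.
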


\begin{proof}
  Let us set $\Phi_2=(\phi_2,\phi_2^1,\dots,\phi_2^i,\dots)$, where
  \begin{equation*}
    \phi_2=
    -\frac{1}{2}\,\left(8 r_2^{-1}+42 y (3 y^2-2 u)+(80 x y-7 r_1^{-1}) u_x
    +2 (7 y^2+6 u) u_y-50   x^2-96 y^3\right)
  \end{equation*}
  and
  \begin{equation*}
    \phi_2^i=
    -\frac{1}{2}\,\left(
    (80 x y-7 r_1^{-1}) r_{i, x}^{-1}+2 (7 y^2+6 u) r_{i, y}^{-1}
    -2 (i+8) r_{i+2}^{-1}-8 y (i+6) r_{i}^{-1}-10 (i+5) x r_{i-1}^{-1}
    \right)
  \end{equation*}
  for all $i\geq1$.
\end{proof}

\begin{proposition}
  \label{prop:lie-algebra-nonlocal}
  The symmetries~$\Phi_{-2}$\textup{,} $\Phi_{-1}$\textup{,} and~$\Phi_2$
  exist in any covering~$\tau^p$.
\end{proposition}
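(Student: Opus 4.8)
The plan is to combine Proposition~\ref{prop:equiv-cover-taup-1} with Lemmas~\ref{lem:lie-algebra-nonlocal}, \ref{lem:lie-algebra-nonlocal-1}, and~\ref{lem:lie-algebra-nonlocal-2}. The three lemmas establish the existence of the symmetries $\Phi_{-2}$, $\Phi_{-1}$, and $\Phi_2$ in the specific coverings $\tau^{-5}$, $\tau^{-4}$, and $\tau^{-1}$ respectively. Proposition~\ref{prop:equiv-cover-taup-1} tells us that each $\tau^p$ is equivalent to $\tau^{-4}$; hence, by transitivity of the equivalence relation on coverings, any two coverings $\tau^p$ and $\tau^q$ are equivalent. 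The key observation is that an equivalence of coverings, being a diffeomorphism $g$ that intertwines the distributions $\tilde{\mathcal{C}}$, induces an isomorphism of the Lie algebras of nonlocal symmetries $\sym_{\tau^p}\mathcal{E}_1 \xrightarrow{\sim} \sym_{\tau^q}\mathcal{E}_1$ by pushforward $\Phi \mapsto g_*\Phi$. Therefore, the existence of a nonlocal symmetry in one $\tau^p$ transfers to every other $\tau^q$.

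Concretely, I would argue as follows. First I would note that $\Phi_{-1}$ exists in $\tau^{-4}$ by Lemma~\ref{lem:lie-algebra-nonlocal}, so via the equivalences of Proposition~\ref{prop:equiv-cover-taup-1} it exists in every $\tau^p$. For $\Phi_{-2}$, which Lemma~\ref{lem:lie-algebra-nonlocal-1} places in $\tau^{-5}$, I would first invoke the equivalence between $\tau^{-5}$ and $\tau^{-4}$ to move it to $\tau^{-4}$, and then again to every $\tau^p$; the same reasoning applied to $\Phi_2 \in \tau^{-1}$ from Lemma~\ref{lem:lie-algebra-nonlocal-2} completes the argument. One should also record that the weights are preserved (or at least that the existence statement does not depend on weight bookkeeping), since the equivalence maps in Proposition~\ref{prop:equiv-cover-taup-1} are built from weight-homogeneous data — indeed the vector field $\mathcal{X}$ and the quantities $Q_{k,j}$, $d_i$ are homogeneous — so the transferred symmetries retain the weights $-2$, $-1$, and $2$.

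The main obstacle, and the point that deserves care in the write-up, is making precise that the equivalence of coverings genuinely induces a bijection on nonlocal symmetries. This is a standard fact — a diffeomorphism $g$ with $\tau_2 \circ g = \tau_1$ and $g_*\tilde{\mathcal{C}}_1 = \tilde{\mathcal{C}}_2$ pushes forward any vector field that is a symmetry (i.e. whose flow preserves $\tilde{\mathcal{C}}_1$) to one preserving $\tilde{\mathcal{C}}_2$ — but since the coverings here are infinite-dimensional one should at least remark that $g$ acts on the relevant coordinates $u^j_\sigma$ and $r^p_i$ in the prescribed polynomial fashion, so that the pushforward is well defined and the transformed components again solve the defining system~\eqref{eq:11} for the appropriate value of $p$. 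Once this functoriality is stated, the proposition follows immediately, and in fact one could phrase the whole proof in a single sentence: the symmetries exist in $\tau^{-4}$ (possibly after transporting along one equivalence), hence in every equivalent covering $\tau^p$.
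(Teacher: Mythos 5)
Your proposal is correct and follows essentially the same route as the paper: the authors' proof is precisely the observation that the three lemmas give the symmetries in $\tau^{-4}$, $\tau^{-5}$, and $\tau^{-1}$, and Proposition~\ref{prop:equiv-cover-taup-1} transports them to every $\tau^p$ via the equivalence of coverings. Your additional remarks on transitivity and on why an equivalence of coverings induces an isomorphism of nonlocal symmetry algebras only make explicit what the paper leaves implicit.
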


\begin{proof}
  A direct consequence of
  Lemmas~\ref{lem:lie-algebra-nonlocal}--~\ref{lem:lie-algebra-nonlocal-2} and
  Proposition~\ref{prop:equiv-cover-taup-1}.
\end{proof}

\begin{theorem}
  \label{thm:lie-algebra-nonlocal}
  The symmetries~$\Phi_{-2}$\textup{,} $\Phi_{-1}$\textup{,} and~$\Phi_{2}$
  generate the entire Lie algebra~$\sym_{\tau^p}\mathcal{E}_1$ which is
  isomorphic to the Witt algebra~$\mathfrak{W}$\textup{,} i.e.\textup{,} there
  exists a basis~$\Phi_k$\textup{,} $k\in\mathbb{Z}$\textup{,} such that
  \begin{equation*}
    [\Phi_k,\Phi_l] =(l-k)\Phi_{k+l}
  \end{equation*}
  for all~$k$\textup{,} $l\in\mathbb{Z}$.
\end{theorem}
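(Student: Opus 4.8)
The plan is to bootstrap from the three explicitly constructed symmetries $\Phi_{-2},\Phi_{-1},\Phi_2$ (which exist in every $\tau^p$ by Proposition~\ref{prop:lie-algebra-nonlocal}) and to use the weight grading introduced above as the main organizing tool. First I would compute the Lie brackets $[\Phi_{-2},\Phi_{-1}]$, $[\Phi_{-1},\Phi_2]$, and $[\Phi_{-2},\Phi_2]$ directly from the prolongation formula for the bracket of nonlocal symmetries; since weights are additive under the bracket, $[\Phi_i,\Phi_j]$ has weight $i+j$. The bracket $[\Phi_{-2},\Phi_{-1}]$ has weight $-3$, $[\Phi_{-1},\Phi_2]$ has weight $1$, and $[\Phi_{-2},\Phi_2]$ has weight $0$; I expect the first of these to reproduce (a multiple of) a \emph{new} symmetry $\Phi_{-3}$ of weight $-3$, while the latter two should close back onto the span of already-known symmetries after rescaling. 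The key normalization step is to rescale the generators so that $[\Phi_k,\Phi_l]=(l-k)\Phi_{k+l}$ holds on the nose for the low-index cases; this fixes the correct ``Witt'' indexing, and in particular identifies $\Phi_0=[\Phi_{-2},\Phi_2]$ up to scalar as the grading element acting on each $\Phi_k$ by multiplication by $k$.

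Next I would generate the remaining basis inductively. Having $\Phi_{-3},\Phi_{-2},\Phi_{-1},\Phi_0,\Phi_1,\Phi_2$ in hand (with $\Phi_1=[\Phi_{-1},\Phi_2]$ up to scale), I would define $\Phi_{k}$ for $|k|\geq 3$ recursively by $\Phi_{k+1}:=\tfrac{1}{k-2}[\Phi_2,\Phi_k]$ for $k\geq 2$ and, going downward, $\Phi_{k-1}:=\tfrac{1}{-1-k}[\Phi_{-2},\Phi_k]$, say, for $k\leq -3$ — one must check the prefactors are nonzero in the ranges used, and arrange the seeds consistently. The bracket relations $[\Phi_k,\Phi_l]=(l-k)\Phi_{k+l}$ then follow by induction on $|k|+|l|$: reduce an arbitrary bracket to ones involving $\Phi_{\pm2}$ via the Jacobi identity and the definition of the $\Phi_k$, exactly as in the standard proof that the Witt algebra is generated by $e_{-2},e_{-1},e_2$ (equivalently, that $\mathrm{Vect}(S^1)$ is generated in low degrees). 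This is purely formal once the anchoring brackets are pinned down.

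The one genuinely non-formal point, and the main obstacle, is to show that this recursively generated family is a \emph{basis} of $\sym_{\tau^p}\mathcal{E}_1$ — i.e., that there are no further nonlocal symmetries and that the $\Phi_k$ are linearly independent. For surjectivity I would argue by weights: a nonlocal symmetry $\Phi$ decomposes into homogeneous components (the defining system~\eqref{eq:11} and $\tilde\ell_F(\phi)=0$ are weight-homogeneous, using $|r_i^p|=i+4$), so it suffices to bound, for each integer $w$, the dimension of the space of homogeneous symmetries of weight $w$ by $1$. By Proposition~\ref{prop:equiv-cover-taup-1} we may work in the single covering $\tau^{-4}$. The $\phi$-component satisfies $\tilde\ell_F(\phi)=0$; restricting attention to how $\phi$ may depend on the nonlocal tower one shows the space of admissible $\phi$ of a given weight is at most one-dimensional modulo the shadows that are ``invisible'' ($\phi=0$), and then one checks there are no nonzero invisible symmetries in this covering — the homogeneity forces an invisible symmetry to have finitely many nonzero $\psi^\alpha=\phi^i$ of bounded weight, and the recursion $\tilde D_x(\phi^i)=\dots$, $\tilde D_y(\phi^i)=\dots$ with $\phi=0$ then propagates a single top component down to zero. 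Linear independence of the $\Phi_k$ is immediate from the weight grading, since $|\Phi_k|=k$ are pairwise distinct. Assembling these bounds with the explicit lower bound given by the constructed $\Phi_k$ yields $\dim(\sym_{\tau^p}\mathcal{E}_1)_w=1$ for every $w\in\mathbb{Z}$ and hence the claimed isomorphism with $\mathfrak{W}$.
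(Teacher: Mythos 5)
Your generation scheme is essentially the paper's: seed with $\Phi_{-2}$, $\Phi_{-1}$, $\Phi_2$, obtain $\Phi_0$ and $\Phi_1$ from the brackets $[\Phi_{-2},\Phi_2]$ and $[\Phi_{-1},\Phi_2]$, and then build the rest of the basis recursively (the paper uses $[\Phi_1,\Phi_k]$ and $[\Phi_{-1},\Phi_{-k}]$ where you use $[\Phi_{\pm2},\Phi_k]$ --- a cosmetic difference, except that your upward recursion $\Phi_{k+1}=\tfrac{1}{k-2}[\Phi_2,\Phi_k]$ divides by zero at $k=2$, so $\Phi_3$ must come from $[\Phi_1,\Phi_2]$; you flag the need to check prefactors, so this is repairable). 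The weight-additivity of the bracket and the Jacobi-identity induction for the Witt relations are standard and fine. In fact your proposal is more explicit than the paper's proof, which consists only of the recursive definitions and does not spell out either the verification of the relations or the completeness claim.

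The genuine problem is in your completeness argument. You assert that ``there are no nonzero invisible symmetries in this covering,'' and base the bound $\dim(\sym_{\tau^p}\mathcal{E}_1)_w\le 1$ on that assertion. This is false: the paper's first Remark after the theorem states explicitly that the symmetries $\Phi_k$ are \emph{invisible} for $k<-4$, i.e., the entire negative tail of the Witt algebra below weight $-4$ consists of nonzero symmetries with $\phi=0$. Your mechanism for ruling them out also fails on inspection: with $\phi=0$ the recursion~\eqref{eq:11} propagates \emph{upward} from the lowest nonvanishing component (if $\phi^j=0$ for $j<i_0$ then $\tilde D_x(\phi^{i_0})=\tilde D_y(\phi^{i_0})=0$, so $\phi^{i_0}$ is a nonzero constant that then feeds the higher components through the term $-(p+i+1)\phi^{i-2}$), rather than ``propagating a single top component down to zero'' --- there is no top component to start from. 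Consequently the one-dimensionality of each weight space cannot be reduced to a statement about shadows alone; any correct completeness proof must classify the invisible symmetries (showing the space of invisible homogeneous symmetries of each weight $w<-4$ is exactly one-dimensional and vanishes for $w\ge -4$) and combine this with a separate bound on shadows. As written, your argument would ``prove'' that $\sym_{\tau^p}\mathcal{E}_1$ has no elements of weight $<-4$, contradicting the theorem itself.
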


\begin{proof}
  Let set
  \begin{equation*}
    \Phi_0=4[\Phi_{-2},\Phi_2],\qquad \Phi_1=3[\Phi_{-1},\Phi_2]
  \end{equation*}
  and by induction
  \begin{equation*}
    \Phi_{k+1} = (k-1)[\Phi_1,\Phi_k],\qquad \Phi_{-k-1} =
    (1-k)[\Phi_{-1},\Phi_{-k}]
  \end{equation*}
  for all $k\geq 2$.
\end{proof}

\begin{remark}
  The symmetries~$\Phi_k$ are invisible for~$k<-4$.
\end{remark}

\begin{remark}
  The symmetries $\Phi_{-4}$, $\Phi_{-3}$, and $\Phi_0$ are lifts of local
  symmetries $\phi_{-4}$, $\phi_{-3}$, $\phi_0$, respectively.
\end{remark}

\subsection{Explicit formulas}
\label{sec:explicit-formulas}

To obtain explicit formulas for nonlocal symmetries in~$\tau^p$ for
arbitrary\footnote{The case~$p=-3$ is special, but, as it was indicated above,
  is reduced to the case~$p=-2$.} value of~$p$ consider the fields
\begin{equation*}
  \mathcal{Y}_m = \sum_{i=m}^\infty(i-m+1)r_{i-3}^p\frac{\partial}{\partial
    r_{i-4}^p}
\end{equation*}
on the space of~$\tau^p$, where $m=1\dots,4$, and the quantities~$P_{i,j}^m$
defined by induction as follows:
\begin{equation*}
  P_{i,0}^{(m)} = \frac{1}{(i+2)!}\left(r_{m-4}^p\right)^{i+2},\qquad
  P_{i,j+1}^{(m)} = \frac{1}{j+1}\mathcal{Y}_m(P_{i,j}^{(m)}),
\end{equation*}
where~$i$, $j\geq 0$, $m=1,\dots,4$. We set~$P_{i,j}^{(m)}=0$ if at least one
of the subscripts is~$<0$. In terms of these quantities, the lift
of~$\phi_{-2}$ to~$\tau^p$ acquires the form
\begin{equation*}
  \Phi_{-2}^p = (\phi_{-2},\phi_{-2}^{1,p},\dots,\phi_{-2}^{i,p},\dots),
\end{equation*}
where
\begin{equation*}
  \phi_{-2}^{i,p} = -r_{i,y}^p - (p+5)\left(r_{i-2}^p + \sum_{j=1}^\infty
    \left(-\frac{1}{p+3}\right)^j\cdot\prod_{l=0}^{j-1}((p+3)l-2)\cdot
    P_{j-1,i-2j}^{(2)}\right).
\end{equation*}
The lift
\begin{equation*}
  \Phi_{-1}^p = (\phi_{-1},\phi_{-1}^{1,p},\dots,\phi_{-1}^{i,p},\dots)
\end{equation*}
of~$\phi_{-1}$ is given by
\begin{equation*}
  \phi_{-1}^{i,p} = yr_{i,x}^p - (i+2)r_{i-1}^p + (p+4)\sum_{j=1}^\infty
  \left(-\frac{1}{p+3}\right)^j\cdot  \prod_{l=0}^{j-1}((p+3)l-1)\cdot
  P_{j-1,i-2j+1}^{(2)}.
\end{equation*}
The nonlocal symmetry~$\Phi_2$, when passing to~$\tau^p$, acquires the form
\begin{equation*}
  \Phi_{2}^p = (\phi_{2}^p,\phi_{2}^{1,p},\dots,\phi_{2}^{i,p},\dots)
\end{equation*}
with
\begin{align*}
  \phi_{2}^p&=
  -\frac{1}{p+3}\,\left(
  8r_2^p + 2y(4p+25)r_0^p
      + \Big((p+3)(7p+47)xy - 7r_1^p\Big)u_x\right)
  \\
  &
  -\Big((7y^2+6y)u_y - (4p+29)x^2 - \frac{1}{3}y^3(8p^2 + 87p
  +223)\Big)
  \end{align*}
and
\begin{align*}
  \phi_{2}^{i,p}&=
  -\frac{1}{p+3}\,\Big((p+3)(7p+47)xy - 7r_1^p\Big)r_{i,x}^p
  -(7y^2 +  6u)r_{i,y}^p
  \\
  &    +(i+8)r_{i+2}^p + 2y(p+2i+13)r_i^p + x(3p+5i+28)r_{i-1}^p
    \\
  &
  +\frac{p+1}{p+3}\,\left(2P_{0,i-2}^{(4)} +
    \sum_{j=1}^\infty\left(-\frac{1}{p+3}\right)^j 
    \cdot \prod_{l=0}^j(l(p+3)+2)\cdot P_{j,i-2j+2}^{(2)}\right).
\end{align*}
Explicit formulas for other nonlocal symmetries can be obtained using
commutator relations from the proof of
Theorem~\ref{thm:lie-algebra-nonlocal}. For example, the symmetry~$\Phi_1$
that generates the positive part of~$\sym_{\tau^p}\mathcal{E}_1$ is
\begin{equation*}
  \Phi_1=(\phi_1^p,\phi_1^{1,p},\dots,\phi_1^{i,p},\dots)
\end{equation*}
with
\begin{equation*}
  \phi_1^p=-\frac{6}{p+3}\,r_1^p -\big((4y^2+5u)u_x + 4xu_y - 2(3p+16)xy\big)
\end{equation*}
and
\begin{align*}
  \phi_1^{i,p}&=
  -\big((4y^2+5u)r_{i,x}^p + 4xr_{i,y}^p - (i+6)r_{i+1}^p -
  y(2p+3i+16)r_{i-1}^p\big)
  \\
  &
  +\frac{p+2}{p+3}\,\left(P_{0,i-1}^{(3)} + \sum_{j=1}^\infty
    \left(-\frac{1}{p+3}\right)^j\cdot\prod_{l=0}^j(l(p+3)+1) \cdot
    P_{j,i-2j+1}^{(2)}\right).
\end{align*}

\subsection{Equation \eqref{eq:3}}
\label{sec:equation-eq:3}

The results on this equation and their proofs are almost identical to those on
Equation~\eqref{eq:2}. So, we confine ourselves with the final description
of~$\sym_{\sigma^p}\mathcal{E}_2$. Introduce the weights
\begin{equation*}
  \abs{x} = 2,\qquad \abs{y} = 1,\qquad \abs{u} = 3.
\end{equation*}
Consequently, the nonlocal variables in the covering~$\sigma^p$ acquire the
weights~$\abs{r_i^p} = i+3$.

Then we have

\begin{theorem}
  \label{thm:equation-eqrefeq:3}
  The symmetries~$\gamma_{-2}$ and~$\gamma_{-1}$ can be lifted to
  symmetries~$\Gamma_{-2}$ and~$\Gamma_{-1}$ in any covering~$\sigma^p$ over
  Equation~\eqref{eq:3}. In addition\textup{,} there exists a nonlocal
  symmetry~$\Gamma_2$ of weight~$2$. These three symmetries generate the
  entire Lie algebra~$\sym_{\sigma^p}\mathcal{E}_2$ which is isomorphic to the
  Witt algebra~$\mathfrak{W}$.
\end{theorem}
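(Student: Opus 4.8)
The plan is to mirror the strategy already carried out for Equation~\eqref{eq:2} in Sections~\ref{sec:equiv-cover-taup}--\ref{sec:explicit-formulas}, since the structure of the two problems is essentially the same. First I would establish that all the coverings $\sigma^p$ are mutually equivalent, proving the analogue of Proposition~\ref{prop:equiv-cover-taup-1}: namely that $\sigma^p$ is equivalent to $\sigma^{-3}$ for every~$p$. As in the case $\tau^p$, I would exhibit an explicit vector field $\mathcal{X}$ on the space of $\sigma^p$ (built from the symmetry $\gamma_0=3u-2xu_x-yu_y$ lifted to the nonlocal variables) and a change of nonlocal variables $r_i^p = (\text{linear in } p)\cdot(s_i - (\dots)\,d_i)$, where the $d_i$ are generated by a recursion $d$-relation analogous to $Q_{k,j+1}=\frac1j\mathcal{X}(Q_{k,j})$, with $Q_{k,0}=\frac1{(k+2)!}y^{k+2}$ replaced by the appropriate power of $y$ (here $\abs{y}=1$, so $Q_{k,0}=\frac1{(k+1)!}y^{k+1}$). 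The degenerate value $p=-3$ is handled exactly as before: the first defining equations force $r_{-3}^p$ to be an inessential constant, reducing that case to $p=-2$.

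Next I would construct the two lifts and the extra nonlocal symmetry, the analogues of Lemmas~\ref{lem:lie-algebra-nonlocal}--\ref{lem:lie-algebra-nonlocal-2}. For the local symmetry $\gamma_{-2}=-y-\frac12u_x$ of weight $-2$, I expect a lift $\Gamma_{-2}$ to some $\sigma^{q}$ with nonlocal components essentially $-\tfrac12 r_{i,y}^{q}$ (up to lower-order corrections), in parallel with $\Phi_{-2}$; for $\gamma_{-1}$ of weight $-1$ a lift $\Gamma_{-1}$ with components built from $r_{i,x}^{q}$ and $r_{i-1}^{q}$, in parallel with $\Phi_{-1}$. The genuinely new object is $\Gamma_2$ of weight $2$: as with $\Phi_2$, I would make a weight-homogeneous ansatz for $\phi_2$ and each $\phi_2^i$ as a linear combination (with coefficients depending on $p$) of all internal monomials of weight $2$ and~$i+2$ respectively — using $\abs{x}=2$, $\abs{y}=1$, $\abs{u}=3$, $\abs{r_i^p}=i+3$ — substitute into the defining system~\eqref{eq:11} (with $\eqref{eq:10}$ in place of $\eqref{eq:9}$), and solve the resulting linear system for the coefficients. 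The weight grading guarantees that this reduces, degree by degree, to a finite linear problem, so solvability is a finite computation rather than a genuine difficulty; by Proposition~\ref{prop:equiv-cover-taup-1}'s analogue these three symmetries then exist in \emph{every} $\sigma^p$.

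Finally I would reproduce the closing argument of Theorem~\ref{thm:lie-algebra-nonlocal}: define $\Gamma_0=c_0[\Gamma_{-2},\Gamma_2]$, $\Gamma_1=c_1[\Gamma_{-1},\Gamma_2]$ for suitable normalising constants, then by induction $\Gamma_{k+1}=c_k[\Gamma_1,\Gamma_k]$ and $\Gamma_{-k-1}=c_{-k}[\Gamma_{-1},\Gamma_{-k}]$, and check that with the right choice of constants the brackets satisfy $[\Gamma_k,\Gamma_l]=(l-k)\Gamma_{k+l}$. Here the weight grading is again the main bookkeeping tool: $[\Gamma_k,\Gamma_l]$ has weight $k+l$, and one shows that the space of nonlocal symmetries of each weight is one-dimensional (spanned by the appropriate $\Gamma_{k+l}$), which both pins down the proportionality constants and proves that the $\Gamma_k$ exhaust $\sym_{\sigma^p}\mathcal{E}_2$. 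The main obstacle I anticipate is precisely this last point — verifying that there are no \emph{extra} nonlocal symmetries in a given weight, i.e.\ that the linear system defining a symmetry of weight $w$ has a one-dimensional solution space for every $w\in\mathbb{Z}$; as in the treatment of Equation~\eqref{eq:2}, this is settled by a careful (but finite, thanks to homogeneity) analysis of the recursion~\eqref{eq:11}, showing that once $\phi$ is fixed the nonlocal components $\phi^i$ are determined up to adding a multiple of a lower symmetry, which forces the dimension count. Everything else is a routine, if lengthy, transcription of the $\tau^p$ computations with the new weights and the new covering equations~\eqref{eq:10}.
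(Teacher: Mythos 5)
Your proposal follows essentially the same route as the paper: the authors state that the proofs for Equation~\eqref{eq:3} are ``almost identical'' to those for Equation~\eqref{eq:2}, and their proof of the theorem consists precisely of the explicit weight-homogeneous formulas for $\Gamma_{-2}$, $\Gamma_{-1}$ and $\Gamma_2$ in $\sigma^p$ (written in terms of the quantities $P^{(m)}_{i,j}$), obtained by the ansatz-and-reduction scheme you describe, with the equivalence of the coverings $\sigma^p$ and the commutator generation of the Witt algebra carried over from Proposition~\ref{prop:equiv-cover-taup-1} and Theorem~\ref{thm:lie-algebra-nonlocal}. Your extra care about one-dimensionality of each weight component is a reasonable way to justify the ``generate the entire algebra'' claim, which the paper itself leaves implicit.
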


\begin{proof}
  The symmetry $\Gamma_{-2} =
  (\gamma_{-2},\gamma_{-2}^{1,p},\dots,\gamma_{-2}^{i,p},\dots)$ is given
  by~$\gamma_{-2} = -y-u_x/2$ and
  \begin{equation*}
    \gamma_{-2}^{i,p} = -\frac{1}{2}\left(r_{i,x}^p + (p+5)\left(r_{i-2}^p +
        \sum_{j=1}^\infty \left(-\frac{1}{p+3}\right)^j \cdot \prod_{l=0}^{j-1}
          (l(p+3) - 2)\cdot P_{j-1,i-j}^{(1)}\right)\right).
  \end{equation*}
  For $\Gamma_{-1} =
  (\gamma_{-1},\gamma_{-1}^{1,p},\dots,\gamma_{-1}^{i,p},\dots)$ we have
  $\gamma_{-1}= y^2 - 2 x + 2 yu_x - 2 u_y$ and
  \begin{equation*}
    \gamma_{-1}^{i,p} = 2\left(yr_{i,x}^p - r_{i,y}^p - (p+4)\left(r_{i-1}^p +
        \sum_{j=1}^\infty \left(-\frac{1}{p+3}\right)^j \cdot
        \prod_{l=0}^{j-1} (l(p+3)-1)\cdot P_{j-1,i-j+1}^{(1)}\right)\right).
  \end{equation*}
  Finally, we have
  \begin{align*}
    \gamma_2^p &= 7r_2^p + \big((7p + 37)y-6u_x\big)r_1^p\\
    &+(p + 3) \left( (6p + 31)yu + (3p + 19)x^2 - (3p^2 + 18p + 13)xy^2 +
      \frac{1}{12}(3p^3+ 27p^2 + 81p + 89)y^4\right.\\
    &\left.  + \left(5u + 14xy + \frac{19}{3}y^3\right)u_y - \frac{1}{2}(7p^2+
      74p + 197)y^2 u - (7p + 37)xu - 4(2p +
      7)x^2y\right. \\
    &\left.+ \frac{1}{3}(7p^3+ 87p^2 + 333p + 397)xy^3 - \frac{1}{30}(7p^4+
      104p^3 + 558p^2 + 1296p + 1115)y^5\right)
  \end{align*}
  and
  \begin{equation*}
    \gamma_2^{i,p}=-(p+3)((i + 7)r_{i+2}^p + (p + 3i + 19)yr_{i+1}^p) +
    Ar_{i,x}^p + Br_{i,y}^p + Cr_i^p +D
  \end{equation*}
  for~$\Gamma_2 = (\gamma_2^p,\gamma_2^{1,p},\dots,\gamma_2^{i,p},\dots)$, where
  \begin{align*}
    A&=(p + 3)\left(
      (6p + 31)uy + (3p + 19)x^2 - (3p^2 + 18p + 13)xy^2 +
      \frac{1}{12}(3p^3+ 27p^2 + 81p + 89)y^4\right)\\
    &- 6r_1^p,\\
    B&= (p + 3)\left(5u + 14xy + \frac{19}{3}y^3\right),\\
    C&=- (p + 3)\Big( 2(p + 2i + 11)x + (p^2+ 9p + 5i + 33)y^2 \Big),\\
    \intertext{and}
    D&=-(p + 1)\left(2P_{0,i-1}^{(3)} + \sum_{j=1}^\infty
      \left(-\frac{1}{p+3}\right)^j\cdot
      \prod_{l=0}^j(l(p + 3) + 2)\cdot P_{j,i-j+3}^{(1)}
    \right)
  \end{align*}
  for~$i\geq 1$.
\end{proof}

\section{Discussion}
\label{sec:discussion}

We conclude with two remarks.

\begin{remark}
  \label{sec:discussion-1}
  Consider the covering~\eqref{eq:9} with $p=2$  and the generating series
  \begin{equation*}
    R=\sum_{i=1}^\infty\lambda^ir_i
  \end{equation*}
  of the corresponding nonlocal variables. The function~$R$ satisfies the
  system
  \begin{align*}
    R_x&=\lambda(R_y + u_xR_x),\\
    R_y&=\lambda(u_y+y)R_x - (\lambda^3R)_\lambda,
  \end{align*}
  which is equivalent to
  \begin{equation}\label{eq:14}
    \begin{array}{lcr}
    \tilde{R}_x&=\dfrac{\lambda^4\tilde{R}_\lambda}{\lambda^2(u_y+y) +
      \lambda u_x - 1},\\[10pt]
    \tilde{R}_y&=\dfrac{\lambda^3(1-\lambda
      u_x)\tilde{R}_\lambda}{\lambda^2(u_y+y) + \lambda u_x - 1},
  \end{array}
  \end{equation}
  where~$\tilde{R} = \lambda^3R$.  Let $\tilde{R}=\tilde{R}(x,y,\lambda)$ be a
  solution to \eqref{eq:14} such that $\tilde{R}_\lambda \not \equiv 0$. Then
  equation $\tilde{R}(x,y,\lambda) =\mathrm{const}$ defines $\lambda$ as a
  fuction of $x$ and $y$, that is, $\lambda = \psi(x,y)$,
  cf.~\cite{Pavlov-2009}.  Then
  \begin{equation*}
    \tilde{R}_\lambda\cdot\psi_x + \tilde{R}_x = 0,\quad
    \tilde{R}_\lambda\cdot\psi_y + \tilde{R}_y = 0
  \end{equation*}
  and~\eqref{eq:14} transforms to
    \begin{equation*}
    \begin{array}{lcr}
      \psi_x&=&\dfrac{\psi^4}{1-u_x\psi-(u_y+y)\psi^2},\\[10pt]
      \psi_y&=&\dfrac{\psi^3(1-u_x\psi)}{1-u_x\psi-(u_y+y)\psi^2},
    \end{array}
  \end{equation*}
  which, by the gauge transformation~$\psi\mapsto\psi^{-1}$, is equivalent to
  \begin{equation*}
    \begin{array}{lcr}
      \psi_x&=&-\dfrac{1}{\psi^2-u_x\psi-(u_y+y)},\\[10pt]
      \psi_y&=&\dfrac{u_x-\psi}{\psi^2-u_x\psi-(u_y+y)}.
    \end{array}
  \end{equation*}
  This is the covering obtained in~\cite{B-K-M-V-2015} by the direct reduction
  and coincides with the known covering over the Gibbons-Tsarev equation,
  see~\cite{Gibbons-Tsarev-1996}.
\end{remark}

\begin{remark}
  \label{sec:discussion-2}
  In~\cite{B-K-M-V-2014}, two other equations,
  \begin{equation*}
    u_y u_{xy} - u_x u_{yy} = e^y u_{xx}
  \end{equation*}
  and
  \begin{equation*}
    u_{yy} = (u_x + x)u_{xy} - u_y (u_{xx} + 2),
  \end{equation*}
  were obtained as symmetry reductions of the universal hierarchy and the 3D
  rdDym equations, respectively, were obtained. We plan study these equations
  by the methods similar to the used above in the forthcoming publications.
\end{remark}

\section*{Acknowledgments}

Computations were supported by the \textsc{Jets} software,~\cite{Jets}.

The first author (PH) was supported by the Specific Research grant SGS/6/2017
of the Silesian University in Opava.  The third author (OIM) is grateful to
the Polish Ministry of Science and Higher Education for financial support.


\begin{thebibliography}{99}
\bibitem{B-K-M-V-2014} H.~Baran, I.S.~Krasil{\cprime}shchik, O.I.~Morozov, and
  P.~Voj\v{c}\'{a}k, \emph{Symmetry reductions and exact solutions of Lax
    integrable $3$-dimensional systems}, J.\ of Nonlinear Math.\ Phys.,
  \textbf{21}, Number 4, 2014, 643--671.

\bibitem{B-K-M-V-2015} H.~Baran, I.S.~Krasil{\cprime}shchik, O.I.~Morozov, and
  P.~Voj\v{c}\'{a}k, \emph{Integrability properties of some equations obtained
    by symmetry reductions}, J.\ of Nonlinear Math.\ Phys., \textbf{22},
  Number 2, 2015, 210--232.

\bibitem{Jets} H.~Baran, M.~Marvan, \emph{Jets. A software for differential
    calculus on jet spaces and diffeties}.  \url{http://jets.math.slu.cz}.

\bibitem{AMS-book} A.V.~Bocharov et al., \emph{Symmetries of Differential
    Equations in Mathematical Physics and Natural Sciences}, edited by
  A.M.~Vinogradov and I.S.~Krasil{\cprime}shchik). Factorial Publ. House, 1997
  (in Russian). English translation: Amer.  Math. Soc., 1999.


\bibitem{Dunajski-2004} M.~Dunajski,
\emph{A class of Einstein--Weil spaces associated to an integrable system of hydrodynamic type},
J.\ Geom.\ Phys., \textbf{51} (2004), 126--137.

\bibitem{Fer-Moss-2015} E.V.~Ferapontov, J.~Moss, \emph{Linearly degenerate
    partial differential equations and quadratic line complexes}, Comm.\ in
  Anal.\ and Geom., \textbf{23} (2015) no.~1, 91--127.

\bibitem{Gibbons-Tsarev-1996} J.~Gibbons and S.P.~Tsarev, \emph{Reductions of
    the Benney equations}, Physics Letters A, \textbf{211}, Issue 1, 19--24,
  1996.

\bibitem{Kras-Vin-Trends-1989} I.S.~Krasil{\cprime}shchik and A.M.~Vinogradov,
  \emph{Nonlocal trends in the geometry of differential equations: symmetries,
    conservation laws, and B\"acklund transformations}, Acta Appl.\ Math.\
  \textbf{15} (1989) no.~1-2.  Also in: A.M.\ Vinogradov (ed.), Symmetries of
  partial differential equations. Conservation laws - Applications -
  Algorithms, Kluwer Acad.\ Publ., Dordrecht, 1989.

\bibitem{Mikhalev-1992} V. G. Mikhalev, \emph{On the Hamiltonian formalism for
    Korteweg-de Vries type hierarchies}, Funktsional. Anal. i Prilozhen.,
  \textbf{26:2} (1992), 79--82; Funct. Anal. Appl., \textbf{26:2} (1992),
  140--142.

\bibitem{Pavlov-2003} M.V.~Pavlov, \emph{Integrable hydrodynamic chains}, J.\
  Math.\ Phys., \textbf{44} (2003), 4134--4156.

\bibitem{Pavlov-2009} M.V.~Pavlov, Jen Hsu Chang, Yu Tung Chen, Integrability
  of the Manakov-Santini hierarchy, \url{arXiv:0910.2400}, 2009.
\end{thebibliography}
\end{document}